\newcommand\Ci{C_\otimes}
\newcommand\Co{C_\odot}
\newcommand\Li{L_\otimes}
\newcommand\Lo{L_\odot}
\newcommand\Ri{R_\otimes}
\newcommand\Ro{R_\odot}
\newcommand\Tie{\mathcal{T}\!\textit{ie}}
\newcommand\bnfrule[1]{\langle\textit{#1}\rangle}
\newtheorem{theorem}{Theorem}
\g@addto@macro\maketitle{\thispagestyle{empty}\setcounter{page}{0}}
\begin{document}

\title{More ties than we thought}
\author{
\textsc{Dan Hirsch} \\ Upstanding Hackers Inc. \\ \texttt{thequux@upstandinghackers.com}
\\[1cm]
\textsc{Ingemar Markström} \\ 
KTH Royal Institute of Technology, Stockholm \\
\texttt{ingemarm@kth.se} \\[1cm]
\textsc{Meredith L. Patterson} \\ Upstanding Hackers Inc. \\ \texttt{mlp@upstandinghackers.com}
\\[1cm]
\textsc{Anders Sandberg} \\ Oxford University \\ \texttt{anders.sandberg@philosophy.ox.ac.uk}
\\[1cm]
\textsc{Mikael Vejdemo-Johansson} \\ KTH Royal Institute of Technology, Stockholm \\ Jo\v zef \v Stefan Institute, Ljubljana \\ Institute for Mathematics and its Applications, Minneapolis \\ \texttt{mvj@kth.se}
\\[1cm]
}

\maketitle

\begin{abstract}
We extend the existing enumeration of neck tie-knots to include tie-knots with a textured front, tied with the narrow end of a tie. These tie-knots have gained popularity in recent years, based on reconstructions of a costume detail from The Matrix Reloaded, and are explicitly ruled out in the enumeration by Fink and Mao (2000).

We show that the relaxed tie-knot description language that comprehensively describes these extended tie-knot classes is context free. It has a regular sub-language that covers all the knots that originally inspired the work.

From the full language, we enumerate 266\,682 distinct tie-knots that seem tie-able with a normal neck-tie. Out of these 266\,682, we also enumerate 24\,882 tie-knots that belong to the regular sub-language.
\end{abstract}

\newpage

\section{Introduction}
\label{sec:introduction}

\begin{figure}
\includegraphics[width=0.3\textwidth]{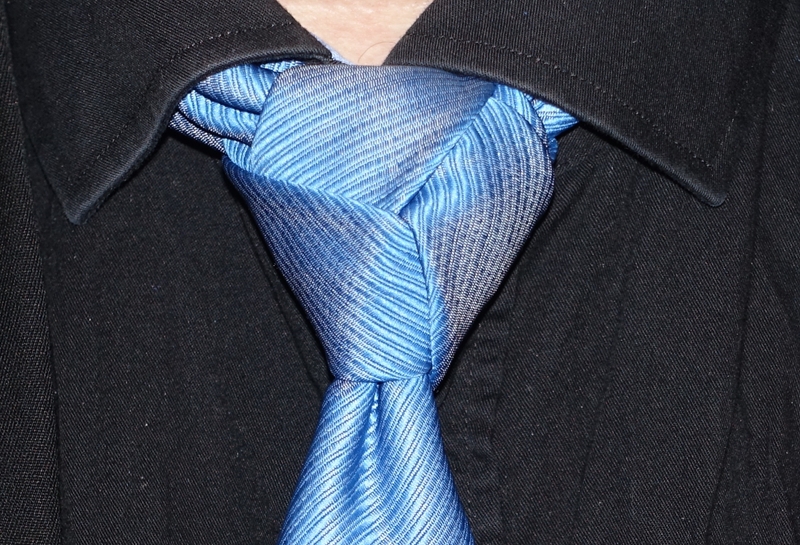} \hspace{12pt}
\includegraphics[width=0.3\textwidth]{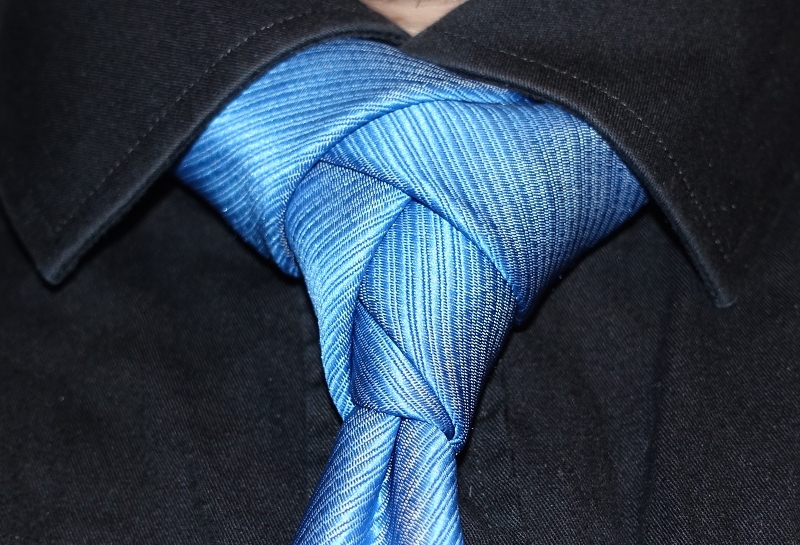} \hspace{12pt}
\includegraphics[width=0.3\textwidth]{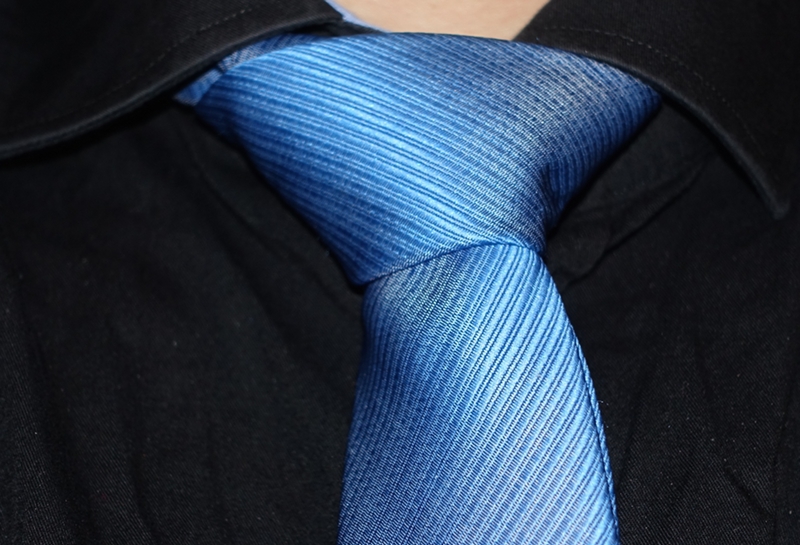} \\[12pt]
\includegraphics[width=0.3\textwidth]{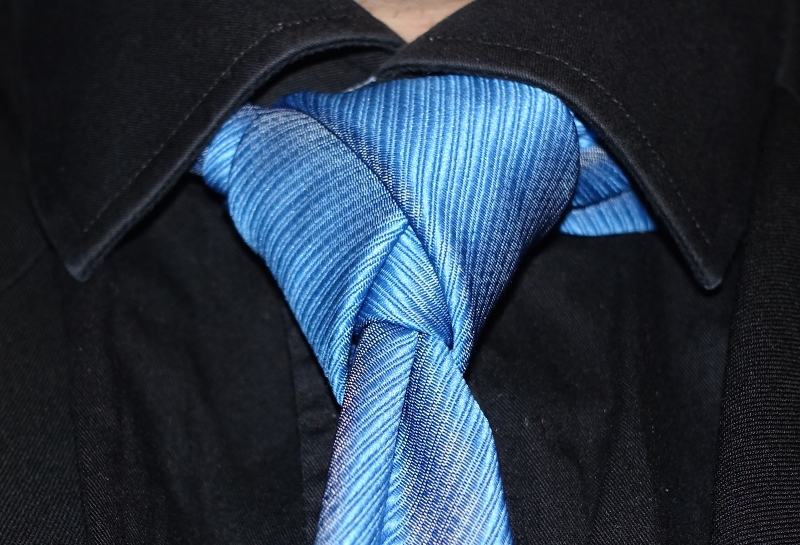} \hspace{12pt}
\includegraphics[width=0.3\textwidth]{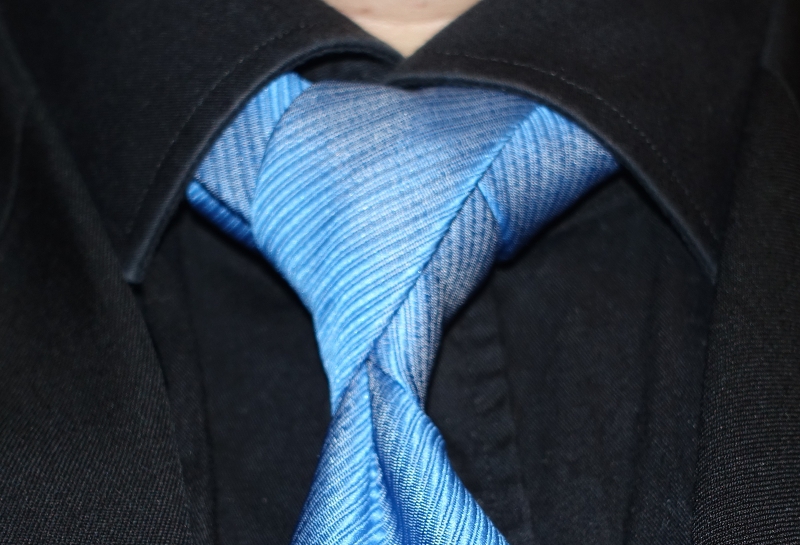} \hspace{12pt}
\includegraphics[width=0.3\textwidth]{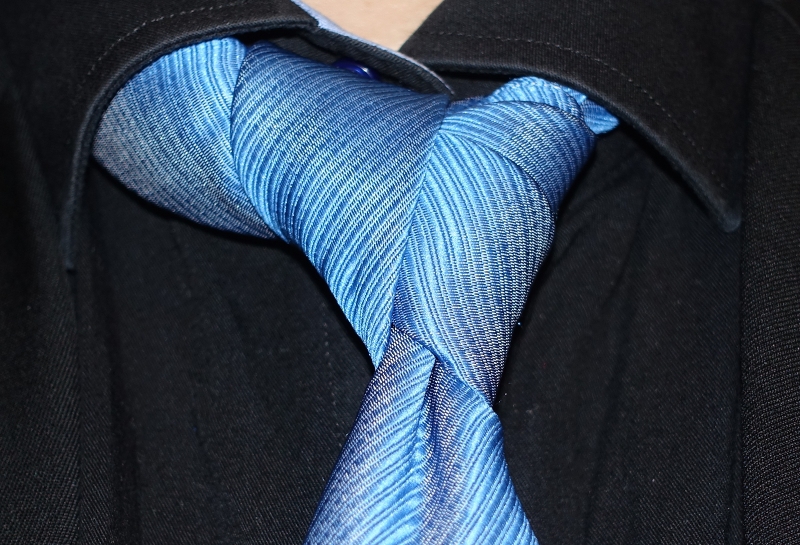} 
\caption{Some specific tie-knot examples. Top row from left: the Trinity (L-110.4), the Eldredge (L-373.2) and the Balthus (C-63.0, the largest knot listed by Fink and Mao). Bottom row randomly drawn knots. From left: L-81.0, L-625.0, R-353.0}
\end{figure}

There are several different ways to tie a necktie. Classically, knots such as the four-in-hand, the half windsor and the full windsor have commonly been taught to new tie-wearers. In a sequence of papers and a book, Fink and Mao \cite{fink200185,fink2000tie,fink1999designing} defined a formal language for describing tie-knots, encoding the topology and geometry of the knot tying process into the formal language, and then used this language to enumerate all tie-knots that could reasonably be tied with a normal-sized necktie.

The enumeration of Fink and Mao crucially depends on dictating a particular finishing sequence for tie-knots: a finishing sequence that forces the front of the knot -- the façade -- to be a flat stretch of fabric. With this assumption in place, Fink and Mao produce a list of 85 distinct tie-knots, and determine several novel knots that extend the previously commonly known list of tie-knots. 

In recent years, however, interest has been growing for a new approach to tie-knots. In \emph{The Matrix Reloaded} \cite{wachowski2003matrix}, the character of ``The Merovingian'' has a sequence of particularly fancy tie-knots. Attempts by fans of the movie to recreate the tie-knots from the Merovingian have led to a collection of new tie-knot inventions, all of which rely on tying the tie with the thin end of the tie -- the thin blade. Doing this allows for a knot with textures or stylings of the front of the knot, producing symmetric and pleasing patterns.

\textcite{eldredge-reloaded} gives the history of the development of novel tie-knots. It starts out in 2003 when the \emph{edeity knot} is published as a PDF tutorial. Over the subsequent 7 years, more and more enthusiasts involve themselves, publish new approximations of the Merovingian tie-knot as PDF files or YouTube videos. By 2009, the new tie-knots are featured on the website Lifehacker and go viral.

In this paper, we present a radical simplification of the formal language proposed by Fink and Mao, together with an analysis of the asymptotic complexity class of the tie-knots language. We produce a novel enumeration of necktie-knots tied with the thin blade, and compare it to the results of Fink and Mao.

\subsection{Formal languages}
\label{sec:formal-languages}

The work in this paper relies heavily on the language of formal languages, as used in theoretical computer science and in mathematical linguistics. For a comprehensive reference, we recommend the textbook by \textcite{sipser2006introduction}.

Recall that given a finite set $\mathcal L$ called an \emph{alphabet}, the set of all sequences of any length of items drawn (with replacement) from $\mathcal L$ is denoted by $\mathcal L^*$. A \emph{formal language} on the alphabet $\mathcal L$ is some subset $\mathcal A$ of $\mathcal L^*$. The complexity of the automaton required to determine whether a sequence is an element of $\mathcal A$ places $\mathcal A$ in one of several complexity classes. Languages that are described by finite state automata are \emph{regular}; languages that require a pushdown automaton are \emph{context free}; languages that require a linear bounded automaton are \emph{context sensitive} and languages that require a full Turing machine to determine are called \emph{recursively enumerable}. This sequence builds an increasing hierarchy of expressibility and computational complexity for syntactic rules for strings of some arbitrary sort of tokens.

One way to describe a language is to give a \emph{grammar} -- a set of production rules that decompose some form of abstract tokens into sequences of abstract or concrete tokens, ending with a sequence of elements in some alphabet. The standard notation for such grammars is the Backus-Naur form, which uses $::=$ to denote the production rules and $\bnfrule{some name}$ to denote the abstract tokens. Further common symbols are $*$, the Kleene star, that denotes an arbitrary number of repetitions of the previous token (or group in brackets), and $|$, denoting a choice of one of the adjoining options.

\section{The anatomy of a necktie}
\label{sec:anatomy-necktie}

In the following, we will often refer to various parts and constructions with a necktie. We call the ends of a necktie \emph{blades}, and distinguish between the \emph{broad blade} and the \emph{thin blade}\footnote{There are neckties without a width difference between the ends. We ignore this distinction for this paper.} -- see Figure~\ref{fig:labels} for these names. The tie-knot can be divided up into a \emph{body}, consisting of all the twists and turns that are not directly visible in the final knot, and a \emph{façade}, consisting of the parts of the tie actually visible in the end. In Figure~\ref{fig:bodyfacade} we demonstrate this distinction. The body builds up the overall shape of the tie-knot, while the façade gives texture to the front of the knot. The enumeration of Fink and Mao only considers knots with trivial façades, while these later inventions all consider more interesting façades. As a knot is in place around a wearer, the Y-shape of the tie divides the torso into 3 regions: Left, Center and Right -- as shown to the right in Figure~\ref{fig:labels}.

\begin{figure}[t]
  \centering
  \includegraphics{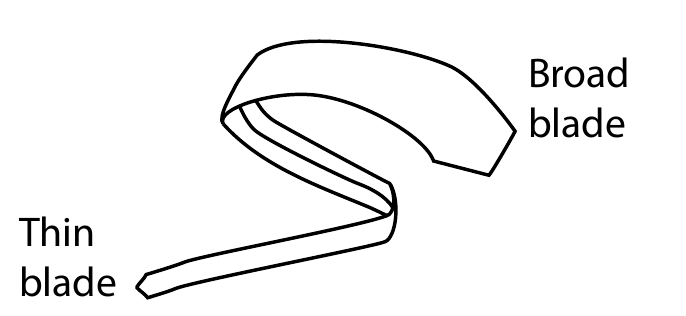} \qquad 
  \includegraphics{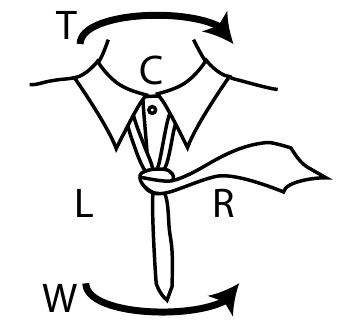}
  \caption{The parts of a necktie, and the division of the wearer's torso with the regions (Left, Center Right) and the winding directions (Turnwise, Widdershins) marked out for reference.}
  \label{fig:labels}
\end{figure}

A tie-knot has to be tied by winding and tucking one of the two blades around the other: if both blades are active, then the tie can no longer be adjusted in place for a comfortable fit. We shall refer to the blade used in tying the knot as the \emph{leading blade} or the \emph{active blade}. Each time the active blade is moved across the tie-knot -- in front or in back -- we call the part of the tie laid on top of the knot a \emph{bow}. 

\begin{figure}[t]
  \centering
  \includegraphics{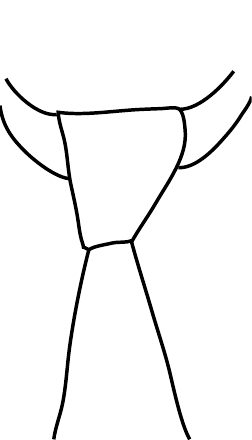} \qquad
  \includegraphics{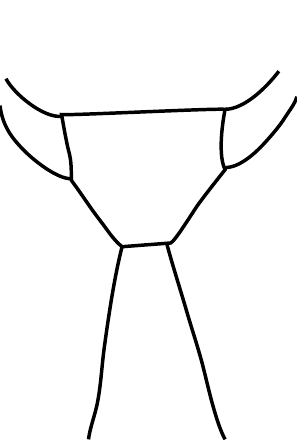} \qquad
  \includegraphics{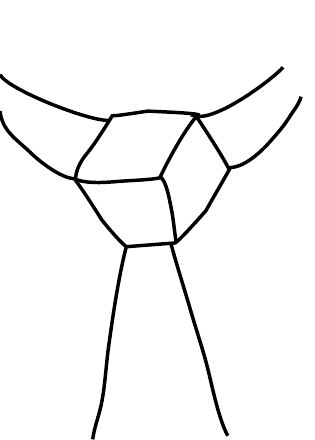} 
  \caption{Different examples of tie-knots. Left, a 4-in-hand; middle, a
double windsor; right a trinity. The 4-in-hand and double windsor
share the flat façade but have different bodies producing different
shapes. The trinity has a completely different façade, produced by a
different wind and tuck pattern. }
\label{fig:bodyfacade}
\end{figure}

\section{A language for tie-knots}
\label{sec:language-tie-knots}

\textcite{fink2000tie} observe that once the first crossing has been made, the wrapping sequence of a classical tie-knot is completely decided by the sequence of regions into which the broad blade is moved. Adorning the region specifications with a direction -- is the tie moving away from the wearer or towards the wearer -- they establish a formal alphabet for describing tie-knots with 7 symbols. We reproduce their construction here, using $U$ for the move to tuck the blade \textit{U}nder the tie itself\footnote{\citeauthor{fink2000tie} used $T$ for \textit{T}uck}.
The notation proposed by \textcite{fink2000tie} interprets repetitions $U^k$ of $U$ as tucking the blade $k$ bows under the top. It turns out that the complexity analysis is far simpler if we instead write $U^k$ for tucking the blade under the bow that was produced $2k$ windings ago. This produces a language on the alphabet:
\[
\{\Li, \Lo, \Ci, \Co, \Ri, \Ro, U\}
\]

They then introduce relations and restrictions on these symbols:
\begin{description}
\item[$\Tie_1$] No region $(L, C, R)$ shall repeat: after an $L$ only $C$ or $R$ are valid next regions. $U$ moves do not influence this.
\item[$\Tie_2$] No direction ($\odot$ -- out of the paper, $\otimes$ -- in towards the paper) shall repeat: after an outwards move, the next one must go inwards. $U$ moves do not influence this.
\item[$\Tie_3$] Tucks ($U$) are valid after an outward move.
\item[$\Tie_4$] A tie-knot can end only on one of $\Ci, \Co$ or $U$. In fact, almost all classical knots end on $U$.\footnote{The exemption here being the Onassis style knot, favored by the eponymous shipping magnate, where after a classical knot the broad blade is brought up with a $\Co$ move to fall in front of the knot, hiding the knot completely.}
\item[$\Tie_5$] A $k$-fold tuck $U^k$ is only valid after at least $2k$ preceding moves. \textcite{fink2000tie} do not pay much attention to the conditions on $k$-fold tucks, since these show up in their enumeration as stylistic variations, exclusively at the end of a knot.
\end{description}

This collection of rules allow us to drastically shrink the tie language, both in alphabet and axioms. \citeauthor{fink1999designing} are careful to annotate whether tie-knot moves go outwards or inwards at any given point. We note that the inwards/outwards distinction follows as a direct consequence of axioms $\Tie_2$, $\Tie_3$ and $\Tie_4$. Since non-tuck moves must alternate between inwards and outwards, and the last non-tuck move must be outwards, the orientation of any sequence of moves follows by backtracking from the end of the string.

Hence, when faced with a non-annotated string like 
\[
\texttt{RCLCRCLCRCLRURCLU}
\]
we can immediately trace from the tail of the knot string: the last move before the final tuck must be outwards, so that \texttt{L} must be a $\Lo$. So it must be preceded by $\Ro\Ci$. Tracing backwards, we can specify the entire string above to
\[
\Ri\Co\Li\Co\Ri\Co\Li\Co\Ri\Co\Li\Ro U\Ci\Ro\Ci\Lo U
\]

Next, the axiom $\Tie_1$ means that a sequence will not contain either of $\texttt{LU}^*\texttt{L}, \texttt{CU}^*\texttt{C}, \texttt{RU}^*\texttt{R}$ as subsequences\footnote{Recall that the Kleene star $F^*$ is used to denote sequences of 0 or more repetitions of the string $F$.}. Hence, the listing of regions is less important than the direction of transition: any valid transition is going to go either clockwise or counterclockwise\footnote{Say, as seen on the mirror image. Changing this convention does not change the count, as long as the change is consequently done.}. Writing \texttt{T} for clockwise\footnote{\texttt{T} for Turnwise} and \texttt{W} for counterclockwise\footnote{\texttt{W} for Widdershins}, we can give a strongly reduced tie language on the alphabet \texttt{T, W, U}. To completely determine a tie-knot, the sequence needs a starting state: an annotation on whether the first crossing of a tie-knot goes across to the right or to the left. In such a sequence, a \texttt{U} instruction must be followed by either \texttt{T} or \texttt{W} dictating which direction the winding continues after the tuck, unless it is the last move of the tie: in this case, the blade is assumed to continue straight ahead -- down in front for most broad-blade tie-knots, tucked in under the collar for most thin-blade knots. 

The position of the leading blade after a sequence of \texttt{W}/\texttt{T} windings is a direct result of $\#W - \#T \pmod{3}$. This observation allows us to gain control over several conditions determining whether a distribution of \texttt{U} symbols over a sequence of \texttt{W}/\texttt{T} produces a physically viable tie-knot. 

\begin{theorem}
  A position in a winding sequence is valid for a $k$-fold tuck if the sub-sequence of the last $2k$ \texttt{W} or \texttt{T} symbols is such that either
  \begin{enumerate}
  \item starts with \texttt{W} and satisfies $\#W-\#T = 2 \pmod 3$
  \item starts with \texttt{T} and satisfies $\#T-\#W = 2 \pmod 3$
  \end{enumerate}
\end{theorem}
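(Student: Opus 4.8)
The plan is to convert the physical condition ``this position admits a $k$-fold tuck'' into the stated congruences by tracking the leading blade's region through the reduced $\{\texttt T,\texttt W,\texttt U\}$-language, and to use the left--right mirror symmetry to halve the work. Given a winding word together with its starting state, assign to each prefix a region $r_i\in\{L,C,R\}$, identified with $\mathbb Z/3\mathbb Z$ so that a $\texttt W$ contributes $+1$ and a $\texttt T$ contributes $-1$; this is precisely the observation quoted above that the blade's position is governed by $\#W-\#T \bmod 3$. Recall also that orientations are forced: the winding immediately before a tuck is outward, the orientations of consecutive windings alternate, and each winding lays down a bow whose side (front or back of the knot) is determined by its orientation. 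By the convention adopted in this paper, a $k$-fold tuck $\texttt U^k$ at a position threads the leading blade under the bow laid down $2k$ windings earlier; writing $w_1w_2\cdots w_{2k}$ for the last $2k$ winding symbols and $r_0,r_1,\dots,r_{2k}$ for the regions they visit, the target bow spans $\{r_0,r_1\}$, the blade is currently over $r_{2k}$, and $r_{2k}-r_0\equiv \#W-\#T\pmod 3$ with the counts taken over that block.

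Next I would reduce to a single case. Reflecting the wearer left-to-right is a symmetry of the whole construction: it fixes the inward/outward distinction, interchanges Turnwise with Widdershins, and carries realizable knots to realizable knots. On winding words it swaps every $\texttt W$ with $\texttt T$, and it takes a block satisfying clause~(1) to one satisfying clause~(2), and conversely. So it suffices to show that when $w_1=\texttt W$, the $k$-fold tuck is admissible if and only if $\#W-\#T\equiv 2\pmod 3$ over $w_1\cdots w_{2k}$; clause~(2) and the reverse implications then follow by applying the reflection.

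The heart of the proof --- and the step I expect to be the real obstacle --- is the claim that the leading blade can be threaded under the bow spanning $\{r_0,r_1\}$ exactly when it currently lies over the third region, the one distinct from both $r_0$ and $r_1$. The intuition is that to pass the blade beneath a bow one must come up through the single region the bow's span does not cover --- as with the four-in-hand, whose broad front band runs between the left and right regions while the blade is brought up through the centre --- so being over either endpoint of the bow leaves no room to pass underneath. This is also where the front/back bookkeeping must be settled: the target bow sits an even number $2k$ of windings before an outward move, and one has to verify that the side this produces is the one compatible with a tuck (which is why the relevant count is $2k$ and not $2k-1$). Granting the claim, since $w_1=\texttt W$ we have $r_1=r_0+1$, so ``the blade lies over the third region'' says $r_{2k}\ne r_0$ and $r_{2k}\ne r_0+1$, i.e.\ $r_{2k}-r_0\equiv 2\pmod 3$, i.e.\ $\#W-\#T\equiv 2\pmod 3$ over the block --- exactly clause~(1). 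Together with the symmetry reduction this gives the theorem. As a sanity check, for $k=1$ the condition forces the last two windings to be $\texttt{WW}$ or $\texttt{TT}$, matching the tucks of the four-in-hand and of the Windsor family.
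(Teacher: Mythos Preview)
Your proposal is correct and follows essentially the same route as the paper: identify the target bow with the first of the $2k$ windings, argue that a tuck must enter from the unique third region not spanned by that bow, and read off the congruence from the $\mathbb Z/3\mathbb Z$ action of \texttt{T} and \texttt{W}. Your explicit mirror-symmetry reduction and $k=1$ sanity check are nice additions, but the core argument is the paper's own.
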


\begin{proof}
  The initial symbol produces the bow under which the tuck will go. If the initial symbol goes, say, from \texttt{R} to \texttt{L}, then the tuck move needs to come from \texttt{C} in order to go under the bow. In general, a tuck needs to come from the one region not involved in the covering bow. Every other bow goes in front of the knot, and the others go behind the knot. Hence, there are $2k-1$ additional winding symbols until the active blade returns to the right side of the knot. During these $2k-1$ symbols, we need to transition one more step around the sequence of regions. The transitions \texttt{W} and \texttt{T} are generator and inverse for the cyclic group of order 3, concluding the proof.
\end{proof}

It is worth noticing here that  a particular point along a winding can be simultaneously valid for both a $k$-fold and an $m$-fold tuck for $k\neq m$. One example would be in the winding string \texttt{TWTT}: ending with \texttt{TT}, it is a valid site for a 1-fold tuck producing \texttt{TWTTU}, and since \texttt{TWTT} starts with \texttt{T} and has 2 more \texttt{T} than \texttt{W}, it is also a valid site for a 2-fold tuck producing \texttt{TWTTUU}. We will revisit this example below, in Section~\ref{sec:depthN}.

We may notice that with the usual physical constraints on a tie -- where we have experimentally established that broad blade ties tend to be bounded by 9 moves, and thin blade ties by 15 moves, we can expect that no meaningful tuck deeper than 7 will ever be relevant; 4 for the broad blade ties. The bound of 4 is achieved in the enumeration by \textcite{fink1999designing}.

In our enumeration, we will for the sake of comfort focus on ties up to 13 moves.

\section{Language complexity}
\label{sec:language-complexity}

In this section, we examine the complexity features of the tie-knot language. Due to the constraints we have already observed on the cardinality of \texttt{W} and \texttt{T}, we will define a grammar for this language. We will write this grammar with a Backus-Naur form. Although in practice it is only possible to realise finite strings in the tie-knot language due to the physical properties of fabric, we assume an arbitrarily long (but finite), infinitely thin tie.

\subsection{Single-depth tucks}
\label{sec:depth1}

The classical Fink and Mao system has a regular grammar, given by

\begin{align*}
\bnfrule{tie} &::= \texttt{L}\bnfrule{L} \\
\bnfrule{lastR} &::= \texttt{L}\bnfrule{lastL} \,|\, \texttt{C}\bnfrule{lastC} \,|\, \texttt{LCU}\\
\bnfrule{lastL} &::= \texttt{R}\bnfrule{lastR} \,|\, \texttt{C}\bnfrule{lastC} \,|\, \texttt{RCU} \\
\bnfrule{lastC} &::= \texttt{L}\bnfrule{lastL} \,|\, \texttt{R}\bnfrule{lastR} \\
\end{align*}

We use the symbol $\bnfrule{lastR}$ to denote the rule that describes what can happen when the last move seen was an \texttt{R}. 
Hence, at any step in the grammar, some tie knot symbol is emitted, and the grammar continues from the state that that symbol was the last symbol emitted.

The above grammar works well if the only tucks to appear are at the end. For intermediate tucks, and to avoid tucks to be placed at the back of the knot (obeying $\Tie_3$), we would need to keep track of winding parity: tucks are only valid an even number of winding steps from the end.

We can describe this with a regular grammar. For the full tie-knot language, the grammar will end up context-free, as we will see in Section~\ref{sec:depthN}.

\begin{align*}
\bnfrule{tie} &::= \bnfrule{prefix} (\bnfrule{pair} \,|\, \bnfrule{tuck})* \bnfrule{tuck} \\
\bnfrule{prefix} &::= \texttt{T} \,|\, \texttt{W} \,|\, \epsilon \\
\bnfrule{pair} &::= (\texttt{T}|\texttt{W})(\texttt{T}|\texttt{W}) \\
\bnfrule{tuck} &::= \texttt{TTU} \,|\, \texttt{WWU}
\end{align*}

The distribution of \texttt{T} and \texttt{W} varies by type of knot: for classical knots, $\#W-\#T = 2\pmod 3$; for modern knots that tuck to the right, $\#W-\#T = 1\pmod 3$; and for modern knots that tuck to the left, $\#W-\#T = 0\pmod 3$. This grammar does not discriminate between these three sub-classes.
In order to track these sub-classes, the \texttt{RLC}-notation is easier. 

In order to rebuild this grammar to one based on the \texttt{RLC}-notation, note that from \texttt{L} a \texttt{T} takes us to \texttt{C} and a \texttt{W} takes us to \texttt{R}. So from a $\bnfrule{lastT}$ residing at \texttt{L}, we have the options: \texttt{W} to \texttt{R}, \texttt{T} to \texttt{C}, or \texttt{TU} to \texttt{C}. In particular, there is a $\bnfrule{lastT}$ at \texttt{L} if we arrived from \texttt{R}. Hence, the \texttt{TU} option can be seen as being a \texttt{TTU} option executing from the preceding \texttt{R} state.

There is thus, at any given position in the tie sequence, the options of proceeding with a \texttt{T} or a \texttt{W}, or to proceed with one of \texttt{TTU} or \texttt{WWU}. In the latter two cases, we can also accept the string. Starting at \texttt{L}, these options take us -- in order -- to \texttt{C}, to \texttt{R}, to \texttt{CRU} and to \texttt{RCU} respectively. This observation extends by symmetry to all stages, giving the grammar below.

\begin{align*}
\bnfrule{lastR} &::= \texttt{LR}\bnfrule{lastR} \,|\, \texttt{CR}\bnfrule{lastR} \,|\, 
  \texttt{LC}\bnfrule{lastC} \,|\, \texttt{CL}\bnfrule{lastL} \,|\, \\
  &\phantom{::=} \texttt{LCU} [\bnfrule{lastC}] \,|\,
  \texttt{CLU} [\bnfrule{lastL}] \\
\bnfrule{lastL} &::= \texttt{RL}\bnfrule{lastL} \,|\, \texttt{CL}\bnfrule{lastL} \,|\, 
  \texttt{RC}\bnfrule{lastC} \,|\, \texttt{CR}\bnfrule{lastR} \,|\, \\
  &\phantom{::=} \texttt{RCU} [\bnfrule{lastC}] \,|\,
  \texttt{CRU} [\bnfrule{lastR}] \\
\bnfrule{lastC} &::= \texttt{LC}\bnfrule{lastC} \,|\, \texttt{RC}\bnfrule{lastC} \,|\, 
  \texttt{LR}\bnfrule{lastR} \,|\, \texttt{RL}\bnfrule{lastL} \,|\, \\
  &\phantom{::=} \texttt{LRU} [\bnfrule{lastR}] \,|\,
  \texttt{RLU} [\bnfrule{lastL}] \\
\bnfrule{tie} &::= \texttt{L} (\bnfrule{lastL} \,|\, R\bnfrule{lastR} \,|\, C\bnfrule{lastC}) \\
\end{align*}

By excluding some the exit rules, this allows us to enumerate novel tie-knots with a specific ending direction, which will be of interest later on.

\subsection{Recursive tucks}
\label{sec:depthN}

We can write a context-free grammar for the arbitrary depth tuck tie-knots.

\begin{align*}
\bnfrule{tie} &::= \bnfrule{prefix} (\bnfrule{pair} \,|\, \bnfrule{tuck})* \bnfrule{tuck} \\
\bnfrule{prefix} &::= \texttt{T} \,|\, \texttt{W} \,|\, \epsilon \\
\bnfrule{pair} &::= (\texttt{T}|\texttt{W})(\texttt{T}|\texttt{W}) \\
\bnfrule{tuck} &::= \bnfrule{ttuck2} \,|\, \bnfrule{wtuck2} \\
\bnfrule{ttuck2} &::= \texttt{TT}\bnfrule{w0}\texttt{U} \,|\, \texttt{TW}\bnfrule{w1}\texttt{U} \\
\bnfrule{wtuck2} &::= \texttt{WW}\bnfrule{w0}\texttt{U} \,|\, \texttt{WT}\bnfrule{w2}\texttt{U} \\
\bnfrule{w0} &::= \texttt{WW}\bnfrule{w1}\texttt{U} \,|\, \texttt{WT}\bnfrule{w0}\texttt{U} \,|\, 
                  \texttt{TW}\bnfrule{w0}\texttt{U} |\texttt{TT}\bnfrule{w2}\texttt{U} \,|\, \\ 
    &\phantom{::=}\bnfrule{ttuck2}\texttt{'}\bnfrule{w2}\texttt{U} \,|\, 
                  \bnfrule{wtuck2}\texttt{'}\bnfrule{w1}\texttt{U} \,|\, \epsilon \\
\bnfrule{w1} &::= \texttt{WW}\bnfrule{w2}\texttt{U} \,|\, \texttt{WT}\bnfrule{w1}\texttt{U} \,|\, 
                  \texttt{TW}\bnfrule{w1}\texttt{U} |\texttt{TT}\bnfrule{w0}\texttt{U} \,|\, \\ 
    &\phantom{::=}\bnfrule{ttuck2}\texttt{'}\bnfrule{w0}\texttt{U} \,|\, 
                  \bnfrule{wtuck2}\texttt{'}\bnfrule{w2}\texttt{U}\\
\bnfrule{w2} &::= \texttt{WW}\bnfrule{w0}\texttt{U} \,|\, \texttt{WT}\bnfrule{w2}\texttt{U} \,|\, 
                  \texttt{TW}\bnfrule{w2}\texttt{U} |\texttt{TT}\bnfrule{w1}\texttt{U} \,|\, \\ 
    &\phantom{::=}\bnfrule{ttuck2}\texttt{'}\bnfrule{w1}\texttt{U} \,|\, 
                  \bnfrule{wtuck2}\texttt{'}\bnfrule{w0}\texttt{U}\\
\end{align*}

Note that the validity of a tuck depends only on the count of \texttt{T} and \texttt{W} in the entire sequence comprising the tuck, and not the validity of any tucks recursively embedded into it. For instance, \texttt{TWTT} is a valid depth-2-tuckable sequence, as is its embedded depth-1-tuckable sequence \texttt{TT}. However, \texttt{TTWT} is also a valid depth-2-tuckable sequence, even though \texttt{WT} is not a valid depth-1-tuckable sequence.

We introduce the symbol \texttt{'} to delineate different tucks that may come in immediate sequence, such as happens in the tie knot \texttt{TWTTU'UU}.

\subsection{Classification of the tie-knot language}
\label{classification}

If we limit our attention to only the single-depth tie-knots described in Section~\ref{sec:depth1}, then the grammar is regular, proving that this tie language is a regular language and can be described by a finite automaton. In particular this implies that the tie-knot language proposed by Fink and Mao \cite{fink1999designing} is regular. In fact, an automaton accepting these tie-knots is given by:


\begin{center}
\begin{tikzpicture}[->, >=stealth', 
shorten >=1pt, auto, node distance=2.8cm, semithick]
\node [initial,state] (start) {};
\node [state, right of=start] (cycle) {};
\node [accepting, state, right of=cycle] (end) {};
\node [state, above of=cycle] (pair) {};
\node [state, below of=cycle] (tuck) {};
\draw (start) edge [bend right,swap] node [near start]{\texttt{T}} (cycle);
\draw (start) edge node [near start] {$\epsilon$} (cycle);
\draw (start) edge [bend left] node [near start] {\texttt{W}} (cycle);
\draw (cycle) edge [bend left] node {\texttt{TT}} (tuck);
\draw (cycle) edge [bend right,swap] node {\texttt{WW}} (tuck);
\draw (tuck) edge node {\texttt{U}} (cycle);
\draw (cycle) edge [bend left] node {\texttt{TTU}} (end);
\draw (cycle) edge [bend right] node {\texttt{WWU}} (end);
\draw (cycle) edge [bend left=10] node [anchor=center,fill=white,near start]{\texttt{T}} (pair);
\draw (cycle) edge [bend left=50] node [anchor=center,fill=white,near start]{\texttt{W}} (pair);
\draw (pair) edge [bend left=10] node [anchor=center,fill=white,near start] {\texttt{T}} (cycle);
\draw (pair) edge [bend left=50] node [anchor=center,fill=white,near start ]{\texttt{W}} (cycle);
\end{tikzpicture}
\end{center}

After the prefix, execution originates at the middle node, but has to go outside and return before the machine will accept input. This maintains the even length conditions required by $\Tie_3$.

As for the deeper tucked language in Section~\ref{sec:depthN}, the grammar we gave shows it to be at most context-free. Whether it is exactly context-free requires us to exclude the existence of a regular grammar.

\begin{theorem}
  The deeper tucked language is context-free.
\end{theorem}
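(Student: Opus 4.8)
The plan is to prove the statement in two halves: that the deeper tucked language $L$ is \emph{at most} context-free, and that it is \emph{not} regular, so that it sits exactly at the context-free level of the Chomsky hierarchy (which is what the remark preceding the theorem is asking for). The upper bound is immediate from the grammar of Section~\ref{sec:depthN}: the only construct there not already in context-free form is the Kleene star in $\bnfrule{tie} ::= \bnfrule{prefix}\,(\bnfrule{pair}\mid\bnfrule{tuck})^{*}\,\bnfrule{tuck}$, and I would desugar it by introducing a left-recursive auxiliary nonterminal, e.g. $\bnfrule{body} ::= \epsilon \mid \bnfrule{body}\,\bnfrule{pair} \mid \bnfrule{body}\,\bnfrule{tuck}$ together with $\bnfrule{tie} ::= \bnfrule{prefix}\,\bnfrule{body}\,\bnfrule{tuck}$. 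Every production then has a single nonterminal on the left and a string over terminals and nonterminals on the right, so the grammar is context-free and hence $L$ is context-free.

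For non-regularity I would exploit closure of the regular languages under intersection. Set $R = \texttt{T}^{*}\texttt{U}^{*}$, which is plainly regular, and the key step is to identify $L\cap R$ explicitly. A word in $R$ contains no \texttt{W} and no delimiter $\texttt{'}$, and all its \texttt{U}'s occur at the end, and this forces a very rigid derivation. Since every $\bnfrule{tuck}$ emits at least one \texttt{U}, and two consecutive $\bnfrule{tuck}$s would put a \texttt{U} strictly before a later \texttt{T}, at most one $\bnfrule{tuck}$ can occur — hence exactly one, the mandatory final one — preceded only by the optional one-symbol prefix and some pairs, each forced to be \texttt{TT}. The outermost tuck must be $\texttt{TT}\,\bnfrule{w0}\,\texttt{U}$, and once \texttt{W} and $\texttt{'}$ are forbidden the productions of $\bnfrule{w0},\bnfrule{w1},\bnfrule{w2}$ collapse to $\bnfrule{w0}\to\texttt{TT}\,\bnfrule{w2}\,\texttt{U}\mid\epsilon$, $\bnfrule{w2}\to\texttt{TT}\,\bnfrule{w1}\,\texttt{U}$, $\bnfrule{w1}\to\texttt{TT}\,\bnfrule{w0}\,\texttt{U}$, so the derivation runs around the forced $3$-cycle $\bnfrule{w0}\!\to\!\bnfrule{w2}\!\to\!\bnfrule{w1}\!\to\!\bnfrule{w0}$ and may halt only at $\bnfrule{w0}$. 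The tuck therefore contributes $\texttt{T}^{6t+2}\texttt{U}^{3t+1}$ for some $t\ge 0$, so that
\[
L \cap R \;=\; \{\, \texttt{T}^{n}\texttt{U}^{m} \;:\; m\ge 1,\; m\equiv 1 \pmod{3},\; n\ge 2m \,\}.
\]

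This language is not regular, by the pumping lemma: given a candidate pumping length $p$, choose $m\equiv 1\pmod{3}$ with $m>p$ and take $s=\texttt{T}^{2m}\texttt{U}^{m}\in L\cap R$. In any decomposition $s=xyz$ with $|xy|\le p$ and $|y|\ge 1$, the block $y$ consists solely of \texttt{T}'s, so $xz=\texttt{T}^{2m-|y|}\texttt{U}^{m}$ violates the constraint $n\ge 2m$ and lies outside $L\cap R$. Hence $L\cap R$ is not regular; since regular languages are closed under intersection with the regular language $R$, $L$ itself cannot be regular. Together with the context-free grammar from the first paragraph, this shows $L$ is context-free but not regular.

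I expect the main obstacle to be the middle step — pinning down $L\cap R$ exactly — because it requires checking that no alternative derivation of an all-\texttt{T}-then-all-\texttt{U} word sneaks in: that the primed alternatives $\bnfrule{ttuck2}\,\texttt{'}\,\bnfrule{wi}\,\texttt{U}$ and $\bnfrule{wtuck2}\,\texttt{'}\,\bnfrule{wi}\,\texttt{U}$ are genuinely excluded by the literal $\texttt{'}$, that the $\bnfrule{wi}$ productions really do reduce to a single forced cycle once \texttt{W} is banned (in particular that only $\bnfrule{w0}$ carries the $\epsilon$ option), and that multiple or $\bnfrule{w}$-initial tucks are all ruled out by the position of the \texttt{U}'s. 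Everything after that identification is the textbook non-regularity argument, so the proof's weight rests on this bookkeeping.
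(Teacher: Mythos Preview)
Your argument is correct and follows a genuinely different route from the paper's. Both obtain the context-free upper bound directly from the grammar of Section~\ref{sec:depthN}. For non-regularity, the paper applies the reverse form of the pumping lemma to the single word family $\texttt{TTW}^{6q-2}\texttt{U}^{3q}$, so that the pumped block lies entirely among the trailing \texttt{U}'s. You instead intersect $L$ with the regular language $R=\texttt{T}^{*}\texttt{U}^{*}$, compute $L\cap R$ explicitly, and then run a standard left-end pumping argument on $\texttt{T}^{2m}\texttt{U}^{m}$. Your route costs more bookkeeping in the middle step---pinning down $L\cap R$ exactly---and you rightly identify that as where the weight lies; the payoff is that the final pumping step is completely routine, you avoid the reverse lemma, and membership of your witness in $L$ is certified by your explicit description of $L\cap R$ rather than merely asserted. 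That care is not wasted here: the paper's chosen witness $\texttt{TTW}^{6q-2}\texttt{U}^{3q}$ does \emph{not} actually parse under the grammar of Section~\ref{sec:depthN}, since tracing the forced $\bnfrule{w0}\!\to\!\bnfrule{w1}\!\to\!\bnfrule{w2}\!\to\!\bnfrule{w0}$ cycle in the delimiter-free, all-\texttt{W} case (exactly as you did in the all-\texttt{T} case) shows the number of \texttt{U}'s in such a tuck must be $\equiv 1\pmod 3$, whereas the paper's word has $3q$ of them. Your closure-and-intersect approach sidesteps precisely this pitfall.
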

\begin{proof}
  Our grammar in Section~ \ref{sec:depthN} already shows that the language for deeper tucked tie-knots is 
  either regular or context-free: it produces tie-knot strings with only single non-terminal symbols to 
  the left of each production rule. 

  It remains to show that this language cannot be regular. To do this, we use the pumping 
  lemma for regular languages. Recall that the pumping lemma states that for every regular
  language there is a constant $p$ such that for any word $w$ of length at least $p$, there 
  is a decomposition $w=xyz$ such that $|xy| \leq p$, $|y|\geq 1$ and $xy^iz$ is a valid 
  string for all $i>0$.

  Since the reverse of any regular language is also regular, the pumping lemma has an 
  alternative statement that requires $|yz|\leq p$ instead. We shall be using this next.

  Suppose there is such a $p$. Consider the tie-knot $\texttt{TTW}^{6q-2}\texttt{U}^{3q}$ for some $q>p/3$.
  Any decomposition such that $|yz|\leq p$ will be such that $y$ and $z$ consist of only $\texttt{U}$ 
  symbols. In particular $y$ consists of only $\texttt{U}$ symbols. Hence, for sufficiently large values 
  of $i$, there are too few preceding \texttt{T}/\texttt{W}-symbols to admit that tuck depth.

  Hence the language is not regular.
\end{proof}

\section{Enumeration}
\label{sec:enumeration}

We can cut down the enumeration work by using some apparent symmetries. Without loss of generality, we can assume that a tie-knot starts by putting the active blade in region \texttt{R}: any knot starting in the region \texttt{L} is the mirror image of a knot that starts in \texttt{R} and swaps all \texttt{W} to \texttt{T} and vice versa.

\subsection{Generating functions}
\label{sec:generatingfunctions}

Generating functions have proven a powerful method for enumerative combinatorics. One very good overview of the field is provided by the textbooks by Stanley~\cite{stanley1997enumerative,stanley1999enumerative}. 
Their relevance to formal languages is based on a paper by Chomsky and Schützenberg~\cite{chomsky1959algebraic} that studied context-free grammars using formal power series. More details will appear in the (yet unpublished) Handbook AutoMathA~\cite{gruber2012enumerating}.

A generating function for a series $a_n$ of numbers is a formal power series $A(z) = \sum_{j=0}^\infty a_jz^j$ such that the coefficient of the degree $k$ term is precisely $a_k$. Where $a_k$ and $b_k$ are counts of ``things of size $k$'' of type $a$ or $b$ respectively, the sum of the corresponding generating functions is the count of ``things of size $k$'' across both categories. If gluing some thing of type $a$ with size $j$ to some thing of type $b$ with size $k$ produces a thing of size $j+k$, then the product of the generating functions measures the counts of things you get by gluing things together between the two types.

For our necktie-knot grammars, the sizes are the winding lengths of the ties, and it is clearly the case that adding a new symbol extends the size (thus is a multiplication action), and taking either one or another rule extends the items considered (thus is an additive action). 

The \texttt{Maple}\footnote{Maple is a trademark of Waterloo Maple Inc. The computations of generating functions in this paper were performed by using Maple.} package \texttt{combstruct} has built-in functions for computing a generating function from a grammar specification. Using this, and the grammars we state in Section~\ref{sec:depth1}, we are able to compute generating functions for both the winding counts and the necktie counts for both Fink and Mao's setting and the single-depth tuck setting.

\begin{itemize}
\item The generating function for Fink and Mao necktie-knots is
\[
\frac{z^3}{(1+z)(1-2z)} = z^3 + z^4 + 3z^5 + 5z^6 + 11z^7 + 21z^8 + 43z^9 + O(z^{10})
\]
\item The generating function for single tuck necktie-knots is
\begin{multline*}
\frac{2z^3(2z+1)}{1-6z^2} = 2z^3 + 4z^4 + 12z^5 + 24z^6 + 72z^7 + 144z^8 + 432z^9 + \\ 864z^{10} + 2\,592z^{11} + 
5\,184z^{12} + 15\,552z^{13}+O(z^{14})
\end{multline*}
\item By removing final states from the BNF grammar, we can compute corresponding generating functions for each of the final tuck destinations. \\
	For an \texttt{R}-final tuck, we remove all final states except for \texttt{CRU} and \texttt{LRU}, making the non-terminal symbol mandatory for all other tuck sequences. For \texttt{L}, we remove all but \texttt{CLU} and \texttt{RLU}. For \texttt{C}, we remove all but \texttt{RCU} and \texttt{LCU}. This results in the following generating functions for \texttt{R}-final, \texttt{L}-final and \texttt{C}-final sequences, respectively.
    \begin{multline*}
    \frac{z^3(2z^3-2z^2+z+1)}{1-6z^2}=z^3 + z^4 + 4z^5 + 8z^6 + 24z^7 + 48z^8 + 144z^9 + \\ 288z^{10} + 864z^{11} + 1\,728z^{12} + 5\,184z^{13}+O(z^{14})
    \end{multline*}
    \begin{multline*}
    \frac{2z^4(2z^2-2z-1)}{1-6z^2}=2z^4 + 4z^5 + 8z^6 + 24z^7 + 48z^8 + 144z^9 + \\ 288z^{10} + 864z^{11} + 1\,728z^{12} + 5\,184z^{13}+O(z^{14})
    \end{multline*}
    \begin{multline*}
    \frac{z^3(2z^3-2z^2+z+1)}{1-6z^2}=z^3 + z^4 + 4z^5 + 8z^6 + 24z^7 + 48z^8 + 144z^9 + \\ 288z^{10} + 864z^{11} + 1\,728z^{12} + 5\,184z^{13}+O(z^{14})
    \end{multline*}
\item Removing the references to the tuck move, we recover generating functions for the number of windings available for each tie length. We give these for \texttt{R}-final, \texttt{L}-final and \texttt{C}-final respectively. Summed up, these simply enumerate all possible \texttt{T}/\texttt{W}-strings of the corresponding lengths, and so run through powers of 2.
\[
\frac{z^3}{1-z-2z^2} = z^3 + z^4 + 3z^5 + 5z^6 + 11z^7 + 21z^8 + 43z^9 + 85z^{10} + 171z^{11} + 341z^{12} + 683z^{13} + O(z^{14})
\]
\[
\frac{2z^4}{(1-2z)(1+z)} = 2z^4 + 2z^5 + 6z^6 + 10z^7 + 22z^8 + 42z^9 + 86z^{10} + 170z^{11} + 342z^{12} + 682z^{13} + O(z^{14})
\]
\[
\frac{z^3}{1-z-2z^2} = z^3 + z^4 + 3z^5 + 5z^6 + 11z^7 + 21z^8 + 43z^9 + 85z^{10} + 171z^{11} + 341z^{12} + 683z^{13} + O(z^{14})
\]
\item For the full grammar of arbitrary depth knots, we set $w$ to be a root of 
$(8z^6-4z^4)\zeta^3+(-8z^6+18z^4-7z^2)\zeta^2+(-16z^6+14z^4-6z^2+2)\zeta-12z^4+9z^2-2=0$ solved for $\zeta$.
Then the generating function for this grammar is:
\begin{multline*}
-\frac{1}{8z^4-11z^2+3}\large(64w^2z^7-128wz^7+32w^2z^6-64wz^6-48z^5w^2+216z^5w-24w^2z^4- \\
-96z^5+108wz^4+8w^2z^3-48z^4-110wz^3+4w^2z^2+82z^3-55z^2w+41z^2+16zw-16z+8w-8\large) = \\
2z^2+4z^3+20z^4+40z^5+192z^6+384z^7+1\,896z^8+3\,792z^9 + \\
19\,320 z^{10} + 38\,640 z^{11} + 202\,392 z^{12} + 404\,784 z^{13} + 2\,169\,784 z^{14} +
O(z^{15})
\end{multline*}
\end{itemize}

\subsection{Tables of counts}

For ease of reading, we extract the results from the generating functions above to more easy-to-reference tables here. Winding length throughout refers to the number of \texttt{R}/\texttt{L}/\texttt{C} symbols occurring, and thus is 1 larger than the \texttt{W}/\texttt{T} count.

The cases enumerated by \textcite{fink2000tie} are

\begin{center}
\begin{tabular}{l|rrrrrrr|l}
  Winding length & 2 & 3 & 4 & 5 & 6 & 7 & 8 & total  \\ \hline
  $\#$ tie-knots & 1 & 1 & 3 & 5 & 11 & 21 & 43 & 85
\end{tabular}
\end{center}


A knot with the thick blade active will cover up the entire knot with each new bow. As such, all thick blade active tie-knots will fall within the classification by \textcite{fink2000tie}.

The modern case, thus, deals with thin blade active knots. As evidenced by the Trinity and the Eldredge knots, thin blade knots have a wider range of interesting façades and of interesting tuck patterns. For thick blade knots, it was enough to assume that the tuck happens last, and from the C region, the thin blade knots have a far wider variety.

The case remains that unless the last move is a tuck -- or possibly finishes in the \texttt{C} region -- the knot will unravel from gravity. We can thus expect this to be a valid requirement for the enumeration. There are often more valid tuck sites than the final position in a knot, and the tuck need no longer come from the \texttt{C} region: \texttt{R} and \texttt{L} are at least as valid.

The computations in Section~\ref{sec:generatingfunctions} establish

\begin{center}
\begin{tabular}{l|rrrrrrrrrrr|r}
  Winding length & 3 & 4 & 5 & 6 & 7 & 8 & 9 & 10 & 11 & 12 & 13 & total\\ \hline
  \# left windings & 0 & 2 & 2 & 6 & 10 & 22 & 42 & 86 & 170 & 342 & 682 & 1\,364 \\
  \# right windings & 1 & 1 & 3 & 5 & 11 & 21 & 43 & 85 & 171 & 341 & 683 & 1\,365 \\
  \# center windings & 1 & 1 & 3 & 5 & 11 & 21 & 43 & 85 & 171 & 341 & 683 &  1\,365 \\ \hline
  \# left knots & 0 & 2 & 4 & 8 & 24 & 48 & 144 & 288 & 864 & 1\,728  & 5\,184 & 8\,294 \\
  \# right knots & 1 & 1 & 4 & 8 & 24 & 48 & 144 & 288 & 864 & 1\,728 & 5\,184 & 8\,294 \\
  \# center knots & 1 & 1 & 4 & 8 & 24 & 48 & 144 & 288 & 864 & 1\,728 & 5\,184 & 8\,294 \\ \hline
  \# single tuck knots & 2 & 4 & 12 & 24 & 72 & 144 & 432 & 864 & 2\,592 & 4\,146 
      & 15\,552 & 24\,882 \\  
  total \# knots & 2 & 4 & 20 & 40 & 192  &384 & 1\,896 & 3\,792 
      & 19\,320 & 38\,640 & 202\,392 & 266\,682
\end{tabular}
\end{center}

The first point where the singly tucked knots and the full range of knots deviate is at the knots with winding length 4; there are 12 singly tucked knots, and 8 knots that allow for a double tuck, 
namely: \\
\begin{tabular}{llll}
\texttt{TTTTU} & 
\texttt{TTWWU} & 
\texttt{TWTTU} & 
\texttt{TWWWU} \\
\texttt{WTTTU} & 
\texttt{WTWWU} & 
\texttt{WWTTU} & 
\texttt{WWWWU} \\
\texttt{TTUTTU} & 
\texttt{TTUWWU} & 
\texttt{WWUTTU} & 
\texttt{WWUWWU} \\
\texttt{TTTWUU} & 
\texttt{TTWTUU} & 
\texttt{TWTTUU} & 
\texttt{TWTTU'UU} \\
\texttt{WTWWUU} & 
\texttt{WTWWU'UU} & 
\texttt{WWTWUU} & 
\texttt{WWWTUU} 
\end{tabular}

The reason for the similarity between the right and the center counts is that the winding sequences can be mirrored. Left-directed knots are different since the direction corresponds to the starting direction.
Hence, a winding sequence for a center tuck can be mirrored to a winding sequence for a right tuck. 

In the preprint version of this paper, we claimed the total count of knots using only single-depth tucks to be 177\,147. During the revision of the paper, we have discovered two errors in this claim: 
\begin{enumerate}
\item There is an off-by-one error in this count.
\item This count was done for tie-knots that allow tucks that are hidden behind the knot. Adding this extra space to the generating grammar produces the generating function
\begin{multline*}
2z^3 + 6z^4 + 18z^5 + 54z^6 + 162z^7 + 486z^8 + 1\,458z^9 + \\
4\,374z^{10} + 13\,122z^{11} + 39\,366z^{12} + 118\,098z^{13} + O(z^{14})
\end{multline*}
with a total of 177\,146 tie-knots with up to 13 moves. 
\end{enumerate}

\section{Aesthetics}
\label{sec:aesthetics}

\textcite{fink2000tie} propose several measures to quantify the aesthetic qualities of a necktie-knot; notably \emph{symmetry} and \emph{balance}, corresponding to the quantities $\#R-\#L$ and the number of transitions from a streak of \texttt{W} to a streak of \texttt{T} or vice versa.

By considering the popular thin-blade neck tie-knots: the Eldredge and the Trinity, as described in \cite{eldredge-alex,trinity-alex}, we can immediately note that balance no longer seems to be as important for the look of a tie-knot as is the shape of its façade. Symmetry still plays an important role in knots, and is easy to calculate using the CLR notation for tie-knots.

\begin{center}
\begin{tabular}[t]{rllcc}
  Knot & \texttt{TW}-string & \texttt{CLR}-string & Balance & Symmetry \\ \hline
  \textbf{Eldredge} & \texttt{TTTWWTTUTTWWU} & \texttt{LCRLRCRLUCRCLU}   & 3 & 0 \\ 
  \textbf{Trinity} & \texttt{TWWWTTTUTTU}   & \texttt{LCLRCRLCURLU} & 2 & 1 \\
\end{tabular}
\end{center}

We do not in this paper attempt to optimize any numeric measures of aesthetics, as this would require us to have a formal and quantifiable measure of the knot façades. This seems difficult with our currently available tools.

\section{Conclusion}
\label{sec:conclusion}

In this paper, we have extended the enumeration methods originally used by \textcite{fink2000tie} to provide a larger enumeration of necktie-knots, including those knots tied with the thin blade of a necktie to produce ornate patterns in the knot façade.

We have found 4\,094 winding patterns that take up to 13 moves to tie and are anchored by a final single depth tuck, and thus are reasonable candidates for use with a normal necktie. We chose the number of moves by examining popular thin-blade tie-knots -- the Eldredge tie-knot uses 12 moves -- and by experimentation with our own neckties. Most of these winding patterns allow several possible tuck patterns, and thus the 4\,094 winding patterns give rise to 24\,882 singly tucked tie-knots.

We have further shown that in the limit, the language describing neck tie-knots is context free, with a regular sub-language describing these 24\,882 knots.

These counts, as well as the stated generating functions, are dependent on the correctness of the \texttt{combstruct} package in Maple, and the correctness of our encoding of these grammars as Maple code. We have checked the small counts and generated strings for each of the grammars against experiments with a necktie and with the results by Fink and Mao and our own catalogue. 

Questions that remain open include:
\begin{itemize}
\item Find a way to algorithmically divide a knot description string into a body/façade distinction.
\item Using such a distinction, classify all possible knot façades with reasonably short necktie lengths.
\end{itemize}

We have created a web-site that samples tie-knots from knots with at most 12 moves and displays tying instructions: \url{http://tieknots.johanssons.org}. The entire website has also been deposited with Figshare\cite{Vejdemo-Johansson2015}.

All the code we have used, as well as a table with assigned names for the 2\,046 winding patterns for up to 12 moves are provided as supplementary material to this paper. Winding pattern names start with \texttt{R}, \texttt{L} or \texttt{C} depending on the direction of the final tuck, and then an index number within this direction. We suggest augmenting this with the bit-pattern describing which internal tucks have been added -- so that e.g. the Eldredge would be L-373.4 (including only the 3rd potential tuck from the start) and the Trinity would be L-110.2 (including only the 2nd potential tuck). Thus, any single-depth tuck can be concisely addressed.

\section{Acknowledgements}

We would like to thank the reviewers, whose comments have gone a long way to make this a far better paper, and who have caught several errors that marred not only the presentation but also the content of this paper.

Reviewer 1 suggested a significant simplification of the full grammar in Section~\ref{sec:depthN}, which made the last generating function at all computable in reasonable time and memory.

Reviewer 2 suggested we look into generating functions as a method for enumerations. As can be seen in Section~\ref{sec:generatingfunctions}, this suggestion has vastly improved both the power and ease of most of the results and calculations we provide in the paper.

For these suggestions in particular and all other suggestions in general we are thankful to both reviewers.

\newpage
\printbibliography

\end{document}